\newtheorem{theorem}{Theorem}
\newtheorem{lemma}{Lemma}
\newtheorem{example}{Example}
\newcommand{\F}{\mathbb{F}}
\newcommand{\Z}{\mathbb{Z}}
\newcommand{\sq}{\mathbf{s}}
\newcommand{\sa}{\mathbf{a}}
\newcommand{\g}{\hat{g}}
\newcommand{\im}{\textrm{Img}}
\newcommand{\ke}{\textrm{Ker}}
\def\BibTeX{{\rm B\kern-.05em{\sc i\kern-.025em b}\kern-.08em
    T\kern-.1667em\lower.7ex\hbox{E}\kern-.125emX}}
\begin{document}
\bibliographystyle{IEEEtran}

\title{Linear Complexity  of  A Family of Binary $pq^2$-periodic Sequences From Euler Quotients}

	\author{Jingwei Zhang, Shuhong Gao and Chang-An~Zhao$^*$
	\thanks{The work of Chang-An Zhao is partially supported by National Key R$\&$D Program of China under Grant No. 2017YFB0802500, by NSFC under Grant No. 61972428 and by the Major Program of Guangdong Basic and Applied Research under Grant No. 2019B030302008.  The work of Shuhong Gao was partially supported by the  National Science Foundation  under grants  DMS-1403062 and  DMS-1547399.
		The work of Jingwei Zhang was partially supported by the National Social Science Fund of China under Grant No.14BXW031 and by Guangdong Basic and Applied Basic Research Foundation under Grant No. 2019A1515011797.
		
		J.W. Zhang is with School of Information Science, Guangdong University of Finance and Economics, Guangzhou, 510320, P.R. China (E-mail: jingweizhang@gdufe.edu.cn)  }%
	\thanks{S. Gao is with School of Mathematical and Statistical Sciences, Clemson University, 
Clemson, S.C.,29634, U.S.A (E-mail: sgao@math.clemson.edu)}
\thanks{C.-A, Zhao is with  School
	of Mathematics, Sun Yat-sen University, Guangzhou 510275, P.R.China and with 	Guangdong Key Laboratory of Information Security,	Guangzhou {\rm 510006}, P.R. China.
	
	$*$ Corresponding author.
	(E-mail: zhaochan3@mail.sysu.edu.cn)} 

}
\maketitle

\begin{abstract}
	We first introduce a 
 family of binary $pq^2$-periodic sequences based on  the  Euler quotients modulo $pq$, where $p$ and $q$ are  two distinct odd primes and $p$ divides $q-1$. The minimal polynomials and linear complexities are determined for the proposed sequences provided that $2^{q-1} \not\equiv 1 \mod{q^2}.$ The results show that the proposed sequences have high linear complexities. 
\end{abstract}

\begin{IEEEkeywords}Cryptography, linear complexity, binary sequences,  Euler quotients.
\end{IEEEkeywords}

\section{\textsf{Introduction}}\label{sec1}
We will begin by the following definition of the Euler quotient modulo a product of two distinct odd primes.  
Let $p$ and $q$ be two distinct odd  primes.
For a nonnegative integer $t$ that is relatively prime to $pq$, the Euler quotient $\psi(t) \pmod{pq}$ is defined as a unique integer in $\Z_{pq}$ with
\begin{equation}\label{equa1}
\psi(t)=\frac{t^{\varphi(pq)}-1}{pq} \pmod{pq},
\end{equation}
where $\varphi(\cdot)$ is the well-known Euler-phi function. We also define $\psi(t)=0$ if  $t$ and $pq$ are not relatively prime. 

It can be seen easily that  the Euler quotient has the following property: 
%
\begin{equation}\label{eqn3}
\psi(t+kpq)\equiv \psi(t) {\color{black} + } kt^{-1}(p-1)(q-1) \pmod{pq}.
\end{equation}
where $t,k\in{\Z}$ and  $t$ is relatively prime to $pq.$

In 2010,  Chen, Ostafe and  Winterhof\cite{ChenOstafeWin2010} introduced families of binary sequences using Fermat/Euler quotients. Then several nice cryptographic properties of these sequences were proved in \cite{ ChenWinterhof2012a,ChenWinterhof2012b, Chen2013,Chen2014,Chen2015a,Chen2015b}.
Based on the distribution and algebraic structure of the Fermat quotients, the linear complexity was determined for a binary threshold sequence defined from Fermat quotients~\cite{Chen2013}. Naturally, the definition of the Euler quotient can be generalized by  Euler's Theorem~\cite{AGOH1997}. Chen and Winterhof extended the distribution of pseudorandom numbers and vectors derived from Fermat quotients to Euler quotients~\cite{ChenWinterhof2012a}. Moreover, linear complexities were calculated for  binary sequences derived from Euler quotients with prime-power modulus. 
%
Trace representations and linear complexities were investigated for binary sequences derived from Fermat quotients~\cite{Chen2014}. Subsequently, a trace representation was given for a family of  binary sequences derived from Euler quotients modulo a fixed power of a  prime~\cite{Chen2015b}. Chen and Winterhof generalized Fermat quotients to  the so-called polynomial quotients in~\cite{ChenWinterhof2012b}. Then the $k$-error linear complexity was determined for binary sequences derived from the polynomial quotient modulo a prime~\cite{Chen2015a} or its power~\cite{Niu2016}, respectively. In~\cite{Song2016}, a series of optimal families of perfect polyphase sequences were derived from the array structure of Fermat-quotient sequences. 
All of the above results show that pseudorandom sequences derived from Fermat quotients, Euler quotients or their variants can be regarded as an important class of sequences  from a cryptographic point of view.

In this paper, we study binary sequences derived from the Euler quotient modulo $pq$. 
Using the same notation as above,  
a binary threshold sequence $\sq =\{s_t|t\in \Z, t\geq 0 \}$ from the Euler quotient modulo $pq$ can be defined as
\begin{equation}\label{eqn1}
s_t= \left\{\begin{array}
{l@{\quad\quad}l}
0,  &\textrm{if} ~~~ 0\leq\frac{\psi(t)}{pq}<\frac{1}{2},\\
1, & \textrm{if} ~~~ \frac{1}{2}\leq\frac{\psi(t)}{pq}<1.
\end{array}\right.
\end{equation}

For our purpose, we introduce the concept of the linear complexity of binary sequences now. The linear complexity  of an $N$-periodic sequence $\sa=\{a_i|i\in \Z, i\geq 0\}$ over the binary field $\F_2$ is the smallest nonnegative integer $L$ for which there exist elements $c_1,c_2,\cdots, c_L\in \F_2$ such that 
$$a_i+c_1\cdot a_{i-1} + \cdots + c_L\cdot a_{i-L}=0, ~for~all~ i\geq L.$$

Let $A(x)=\sum\limits_{i=0}^{N-1}a_{i}x^{i}\in \F_2[x]$ be the generating polynomial of $\sa$.  By~\cite{Ding1991}, the minimal polynomial of $\sa$   is defined as
$$M_{\sa}(x)=  \frac{x^N-1}{\gcd(x^N-1,A(x))},$$
where $\gcd(\cdot,\cdot)$ denotes the greatest common divisor of two polynomials over $\F_2$ and the linear complexity of $\sa$ is  
$$\mathcal{L}(\sa)=N - \deg(\gcd(x^N-1,A(x))). 
$$
Note that the linear complexity is of fundamental importance as a complexity measure for binary sequences in sequences designs~\cite{Ding1991,fan1996sequence,Golomb2005}.
Besides the measure of the linear complexity for sequences, other measures are also required according to different specific requirements from  applications, for example,  low autocorrelations or cross-correlations~\cite{WeiSu2018,YangIT2018}, good nonlinear properties~\cite{Zhao2018,JiangLin2018,Zhang2019}, and sphere complexities (or $k$-error complexities)~\cite{chen2015,chen2018}.
{\color{black}
For a binary sequence to be cryptographically strong, the linear complexity of the sequence should be at least a half of the least period of the sequence. 
In fact, if the linear complexity of a binary sequence $\sa$ for additive stream ciphering purposes is $\mathcal{L}(\sa)$, then  
$2\mathcal{L}(\sa)$ consecutive bits of the sequence can be employed to ``recover" the whole key stream with the well-known Berlekamp-Massey algorithm~\cite{Massey1969,Ding1991}. Therefore, 
it is  necessary that  key stream sequences must have large linear complexity in additive stream ciphers  to resist the Berlekamp-Massey attack from a  point of cryptographic view.
}

The main contribution of this paper is to determine the minimal polynomial and the linear complexity of the sequence defined in (\ref{eqn1}). We state our main result as follows.

\begin{theorem} \label{th1}
	Let $p$ and $q$ be two distinct odd primes with $p$ dividing $q-1$.  Assume that  $2^{q-1}\not \equiv 1 \pmod{q^2}.$ Then the  binary threshold sequence $\sq$  defined in~(\ref{eqn1}) has period at least $pq^2.$
	The minimal polynomial of $\sq$ is 
	\[M_{\sq}(x)= \left\{\begin{array}
	{l@{\quad\quad}l}
	\Phi_{pq^2}(x), 
	&\textrm{if} ~~~ q \equiv 1 \pmod{4}, \\
	\Phi_{pq^2}(x)\Phi_{pq}(x),  &\textrm{if} ~~~ q \equiv 3 \pmod{4},
	\end{array}\right.\] 		
	where $\Phi_{n}(x)$ denotes the $n$-th cyclotomic polynomial  for any positive integer $n$ and 
	the linear complexity of $\sq$ 
	is 
	\[\mathcal{L}(\mathbf{s})= \left\{\begin{array}
	{l@{\quad\quad}l}
	(p-1)(q^2-q), &\textrm{if} ~~~ q \equiv 1 \pmod{4}, \\
	(p-1)(q^2-1),  &\textrm{if} ~~~ q \equiv 3 \pmod{4}.
	\end{array}\right.\]	
\end{theorem}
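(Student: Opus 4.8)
My plan is to use $\mathcal{L}(\sq)=N-\deg\gcd(x^{N}-1,A(x))$ with $N=pq^{2}$ and $A(x)=\sum_{t\in C}x^{t}$, where $C=\{0\le t<N:s_t=1\}$; this reduces everything to deciding, for each divisor $d\mid pq^{2}$, whether the primitive $d$-th roots of unity in $\overline{\F}_{2}$ annihilate $A$. The organising observation is that (\ref{eqn3}), together with the identity $\psi(t_1t_2)\equiv\psi(t_1)+\psi(t_2)\pmod{pq}$ obtained by expanding $(t_1t_2)^{\varphi(pq)}$ modulo $(pq)^{2}$, makes $\psi$ a group homomorphism $\psi\colon(\Z/pq^{2}\Z)^{*}\to(\Z_{pq},+)$. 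The first real step is to determine its image and kernel. Writing $t^{(p-1)(q-1)}=(t^{p-1})^{q-1}$ and expanding modulo $p^{2}$ shows the mod-$p$ component of $\psi$ is proportional to $q-1$, hence vanishes because $p\mid q-1$; writing instead $t^{(p-1)(q-1)}=(t^{q-1})^{p-1}$ and expanding modulo $q^{2}$ shows the mod-$q$ component is a nonzero multiple of the Fermat quotient $Q_q(t)=\frac{t^{q-1}-1}{q}\bmod q$. Consequently $\im\psi=p\Z_{pq}$ has order $q$, $\ke\psi$ has order $(p-1)(q-1)$, and by the Chinese Remainder Theorem $\ke\psi=(\Z/p\Z)^{*}\times K_q$ with $K_q=\ke Q_q\subset(\Z/q^{2}\Z)^{*}$ of order $q-1$.

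Since the threshold in (\ref{eqn1}) selects the $(q-1)/2$ largest values $\psi(t)\ge pq/2$, the support $C$ is a union of $(q-1)/2$ cosets of $\ke\psi$, and the factorisation of the kernel makes it a CRT product set $C=(\Z/p\Z)^{*}\times L$, where $L\subset(\Z/q^{2}\Z)^{*}$ is a union of $(q-1)/2$ cosets of $K_q$ with $|L|=(q-1)^{2}/2$. Writing a $pq^{2}$-th root of unity as $\xi=\beta\gamma$ with $\beta^{p}=1$ and $\gamma^{q^{2}}=1$, the product structure yields the decisive factorisation
\[A(\xi)=\Big(\sum_{a\in(\Z/p\Z)^{*}}\beta^{a}\Big)\Big(\sum_{u\in L}\gamma^{u}\Big)=:P(\beta)\,G(\gamma).\]
In $\F_{2}$ one has $P(1)=0$ and $P(\beta)=1$ for $\beta\ne1$, so $A$ vanishes on every root of unity of order dividing $q^{2}$, settling the divisors $d\in\{1,q,q^{2}\}$ at once. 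For $\beta\ne1$ we have $A(\xi)=G(\gamma)$, and two sub-cases are elementary: when $d=p$ we get $G(1)=|L|\bmod 2=0$ since $(q-1)^{2}/2$ is even; when $d=pq$ the root $\gamma$ has order $q$, and because $Q_q$ takes every value equally often among the $q$ lifts of a fixed residue modulo $q$ (the analogue of (\ref{eqn3}) for $Q_q$), each residue class contributes $|T|=(q-1)/2$ terms, so $G(\gamma)=|T|\sum_{r=1}^{q-1}\gamma^{r}$, which equals $(q-1)/2\bmod 2$ since $\sum_{r=1}^{q-1}\gamma^{r}=1$ in $\F_{2}$. This already gives the dichotomy, since $(q-1)/2$ is odd exactly when $q\equiv3\pmod4$.

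The main obstacle is the case $d=pq^{2}$, where $\gamma$ is a primitive $q^{2}$-th root of unity and $G(\gamma)=\sum_{u\in L}\gamma^{u}$ no longer collapses; I must show it is nonzero for every such $\gamma$. The plan is to introduce the period sums $\tau_j=\sum_{u\in L_j}\gamma^{u}$ over the individual cosets $L_j$ of $K_q$, so that $G(\gamma)=\sum_{j\in J}\tau_j$ with $|J|=(q-1)/2$, and to exploit two facts: first $\sum_{j=0}^{q-1}\tau_j=\sum_{u\in(\Z/q^{2}\Z)^{*}}\gamma^{u}=0$; second, and crucially, $2^{q-1}\not\equiv1\pmod{q^{2}}$ is equivalent to $Q_q(2)\ne0$, so that multiplication by $2$ shifts the Fermat-quotient value by a fixed nonzero amount and the Frobenius map $\tau_j\mapsto\tau_j^{2}$ permutes $\{\tau_0,\dots,\tau_{q-1}\}$ as a single $q$-cycle. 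Thus the $\tau_j$ form one Galois orbit over $\F_{2}$, and the task reduces to proving that the only $\F_{2}$-linear relation among them is $\sum_j\tau_j=0$, whence the half-sum $\sum_{j\in J}\tau_j$, being neither empty nor full, cannot vanish. I expect to obtain this exactly as in the prime-power-modulus analysis of Euler/Fermat-quotient sequences in \cite{ChenWinterhof2012a}; this is precisely the step where the Wieferich-type hypothesis is indispensable, since if $Q_q(2)=0$ the cosets would be Frobenius-stable and the orbit structure, hence the whole argument, would collapse.

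Assembling the evaluations determines $\gcd(x^{N}-1,A(x))$: the factors $\Phi_{1},\Phi_{p},\Phi_{q},\Phi_{q^{2}}$ always divide it, $\Phi_{pq^{2}}$ never does, and $\Phi_{pq}$ divides it precisely when $q\equiv1\pmod4$. Dividing into $x^{N}-1$ gives $M_{\sq}(x)=\Phi_{pq^{2}}(x)$ for $q\equiv1\pmod4$ and $M_{\sq}(x)=\Phi_{pq^{2}}(x)\Phi_{pq}(x)$ for $q\equiv3\pmod4$; taking degrees with $\deg\Phi_{pq^{2}}=(p-1)q(q-1)=(p-1)(q^{2}-q)$ and $\deg\Phi_{pq}=(p-1)(q-1)$ yields the stated complexities $(p-1)(q^{2}-q)$ and $(p-1)(q^{2}-1)$. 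Finally, since $M_{\sq}$ has primitive $pq^{2}$-th roots of unity among its roots, the least period of $\sq$ is $pq^{2}$, giving the claimed lower bound on the period. All the genuine difficulty is concentrated in the non-vanishing of $G(\gamma)$ at primitive $q^{2}$-th roots; every other evaluation is forced either by the character-sum factorisation or by a parity count.
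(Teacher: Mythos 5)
Your reduction and all of the ``easy'' evaluations are correct, and in fact your CRT factorisation $A(\xi)=P(\beta)G(\gamma)$ handles the roots of unity of orders $1,q,q^2,p,pq$ more cleanly than the paper does (the paper gets the same values through multiset lemmas for $D_\ell$ modulo $p$, $pq$, $q^2$ and a coefficient-of-$\Phi_{pq}$ computation); your closing remark that $\Phi_{pq^2}\mid M_{\sq}$ forces least period $pq^2$ is also a legitimate shortcut past the paper's Lemma~1. The genuine gap is exactly where you park it: the non-vanishing of $G(\gamma)=\sum_{j\in J}\tau_j$ at primitive $q^2$-th roots of unity is asserted, not proved. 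You propose to establish the much stronger statement that the only $\F_2$-linear relation among the $q$ period sums $\tau_0,\dots,\tau_{q-1}$ is $\sum_j\tau_j=0$, and to ``obtain this exactly as in \cite{ChenWinterhof2012a}.'' That claim is not quotable from there and is not obviously true: it amounts to saying that $\tau_0$ generates a submodule of dimension $q-1$ of $\F_{2^q}$ over the group algebra $\F_2[z]/(z^q-1)$, i.e.\ that $\tau_0$ has nonzero component in every nontrivial simple component, a Gauss-period non-degeneracy statement that would itself need proof and is stronger than what the theorem requires. Moreover, your Frobenius observation only yields that all cyclic shifts $\sum_{j\in J+k}\tau_j$ vanish, i.e.\ that the ideal generated by the indicator polynomial $\chi_J(z)$ annihilates $\tau_0$; since $\chi_J(z)$ may share irreducible factors with $z^{q-1}+\cdots+1$, this does not by itself force $\tau_0=0$ or a contradiction, so the argument as sketched does not close.

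For comparison, the paper closes this step with a concrete counting argument rather than an independence claim: if the value vanished at one primitive $pq^2$-th root, the Frobenius orbit (this is where $2^{q-1}\not\equiv1\pmod{q^2}$, i.e.\ your $Q_q(2)\neq0$, enters) forces the generating polynomial $\Lambda_0(x)$ to vanish at all primitive $pq^2$-th roots, hence to be divisible by $\Phi_{pq^2}(x)\Phi_{q^2}(x)=\Phi_q(x^{pq})=\sum_{j=0}^{q-1}x^{jpq}$ modulo $x^{pq^2}-1$; but any such multiple, written with a reduced cofactor of degree less than $pq$, has a number of nonzero terms divisible by $q$, while $\Lambda_0$ has $\tfrac{1}{2}(p-1)(q-1)^2$ terms, which $q$ does not divide. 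The same device would finish your version: if $G$ vanished at one primitive $q^2$-th root it would vanish at all of them, so $\Phi_{q^2}(x)=\sum_{i=0}^{q-1}x^{iq}$ would divide $\sum_{u\in L}x^u$ modulo $x^{q^2}-1$, forcing the term count $|L|=(q-1)^2/2$ to be divisible by $q$, a contradiction. Until you carry out such an argument (or actually prove your independence claim), the proof is incomplete precisely at the step carrying all the difficulty and the Wieferich-type hypothesis.
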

To the best of our knowledge, this is the first time to introduce this kind of sequences based on the Euler quotient modulo a product of two distinct odd primes. 
Under the condition that $p$ divides $q-1$, we will show that  the binary sequence has period at least $pq^2$. Furthermore, minimal polynomials and linear complexities of this class of binary sequences are determined. {\color{black}
It turns out that the proposed sequence has  linear complexity which is much higher than a half of the least period. This means that it can be employed in stream ciphers. Note that the sphere complexity~\cite{Ding1991} (or $k$-error linear complexity~\cite{StampMartin1993}) of key stream sequences which measures the stability of linear complexity is also cryptographically important. Several classes of sequences derived from Euler (or Fermat) quotients modulo a fixed prime power have good $k$-error linear complexities (see~\cite{Chen2015a,Niu2016}). We expect that the proposed sequence also posses such good properties. In fact, some experimental results have indicated that $1$-error linear complexity of the presented sequence here is still equal to the original linear complexity, which means that it is entirely possible to have 
good linear complexity stability for the proposed sequence.  We leave this problem as future work. }

By using the generalized cyclotomic techniques, one can also construct other binary sequences with period $pq^2$. We refer the reader to see \cite{DING1998140,Bai2005,HuYueWang2012}
for more details. We emphasize that our results are new.  In particular, we  point out that our results are not  special cases of Theorem 4.2 of \cite{HuYueWang2012} although both may give a sequence with period $pq^2$.
In fact, this can be seen easily by comparing linear complexities of the two families of binary sequences. 

In the rest of the paper, we give a proof of the above theorem  in Section~\ref{sec2}, and conclude with a few remarks in Section~\ref{sec4}.

\section{Proof of Main Results} \label{sec2}

In this section, we are devoting to the proof of the main results. 

We first show that $pq^2$ is one of the periods of  sequence $\mathbf{s}$ under the condition that $p$ is a divisor of $q-1$. 
Setting $k=q$ in (\ref{eqn3}), we see that 
$$\psi(t+pq^2)=\psi(t) \pmod{pq}$$
which implies $s_{t+pq^2}=s_t$ for all $t\geq 0$.
Thus the sequence $\sq$ is   periodic with period $pq^2$. We will demonstrate that $pq^2$ is the least  period of the sequence $\sq$ in the following lemma.

\begin{lemma}\label{lma1}
	With the notation above, the sequence $\sq$ has period at least $pq^2$.
\end{lemma}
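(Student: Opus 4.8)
The plan is to leverage the fact, already established in the paragraph preceding the lemma, that $pq^2$ is a period of $\sq$, and to reduce the claim to ruling out the two maximal proper divisors of $pq^2$, namely $q^2$ and $pq$. Since every proper divisor of $pq^2$ divides one of these, and since every multiple of a period is again a period, it suffices to show that neither $q^2$ nor $pq$ is a period; then the least period must be $pq^2$, which is exactly the assertion. Thus the argument splits into two independent contradictions, obtained by exhibiting in each case an index at which the conjectured period is violated.

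For the divisor $pq$ I would argue directly from~(\ref{eqn3}). Fix any $t_0$ coprime to $pq$ and examine the column $t_0, t_0+pq, \ldots, t_0+(q-1)pq$. Writing the increment as $\delta \equiv t_0^{-1}(p-1)(q-1) \pmod{pq}$, the hypothesis $p \mid q-1$ forces $\delta \equiv 0 \pmod p$, while $\gcd((p-1)(q-1),q)=1$ forces $\delta \not\equiv 0 \pmod q$. Consequently the $q$ quantities $m\delta \bmod pq$ (for $0 \le m \le q-1$) are distinct modulo $q$ and all divisible by $p$, hence are exactly the multiples $0,p,2p,\ldots,(q-1)p$. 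By~(\ref{eqn3}) the values $\psi(t_0+mpq)$ then run through the entire residue class of $\psi(t_0)$ modulo $p$. Its smallest representative is $<p\le pq/2$ (giving $s=0$), while its largest is $\ge (q-1)p = pq-p \ge pq/2$ (giving $s=1$), so $s_t$ is non-constant along the column. Were $pq$ a period, $s_t$ would be constant on this column, a contradiction.

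For the divisor $q^2$ the same idea fails, because~(\ref{eqn3}) only controls shifts by multiples of $pq$ and gives no handle on $\psi(t+q^2)$; this is the step I expect to be the main obstacle, and I would sidestep it entirely rather than try to expand $\psi(t+q^2)$. The key observation is that $s_t=0$ whenever $\gcd(t,pq)>1$, so it suffices to pair a genuine $1$ with a \emph{forced} $0$ lying in the same residue class modulo $q^2$. The column analysis above already produces an index $t_2$ coprime to $pq$ with $\psi(t_2)\ge pq/2$, i.e. $s_{t_2}=1$. Since $\gcd(p,q^2)=1$, the Chinese Remainder Theorem yields a nonnegative $t_1$ with $t_1\equiv t_2 \pmod{q^2}$ and $t_1\equiv 0 \pmod p$; then $q \nmid t_1$ (because $q\nmid t_2$), so $\gcd(t_1,pq)=p$ and hence $s_{t_1}=0$. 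If $q^2$ were a period, then $t_1\equiv t_2 \pmod{q^2}$ would force $s_{t_1}=s_{t_2}$, contradicting $0=s_{t_1}\ne s_{t_2}=1$.

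Finally I would record where the hypotheses enter: $p\mid q-1$ is used only in the $pq$ step, to make the column increment divisible by $p$, whereas the $q^2$ step needs only $\gcd(p,q)=1$. In particular the additional assumption $2^{q-1}\not\equiv 1 \pmod{q^2}$ plays no role here and is reserved for the later minimal-polynomial computation. Combining the two contradictions shows that no proper divisor of $pq^2$ can be a period, so $pq^2$ is the least period, as claimed.
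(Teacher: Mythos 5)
Your proof is correct and follows essentially the same route as the paper's: both reduce the claim to showing that neither $pq$ nor $q^2$ is a period, use (\ref{eqn3}) to exhibit an index coprime to $pq$ where $\psi \ge pq/2$ (hence a $1$ of the sequence), and exploit the convention $\psi(t)=0$ when $\gcd(t,pq)\neq 1$ to force a $0$ in the same residue class modulo $q^2$. The only differences are in the choice of witnesses: the paper compares $s_1$ with $s_{pq+1}$ and takes $t=kpq-1$ with $k=\lceil pq/(2(p+q-1))\rceil$, using $p\mid q^2-1$ to make $t+q^2$ a forced zero, whereas you sweep a full column modulo $pq$ and pick the non-unit representative by the Chinese Remainder Theorem.
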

\begin{proof}
	We first prove that $pq$ is not a period of the  sequence $\sq$.	
	By (\ref{eqn3}), we have
	\begin{equation*}
	\psi(pq+1)\equiv \psi(1)+(p-1)(q-1)\equiv (p-1)(q-1) \pmod{pq}.
	\end{equation*}
	It follows from $2(p-1)(q-1)-pq=(p-2)(q-1)-p>0$ that the $(pq+1)$-th term of the sequence $\sq$ is equal to 1, i.e.,  $s_{pq+1}=1$. Note that $s_1=0\neq 1=s_{pq+1}$ according to the definition of the sequence $\sq$. Hence $pq$ is not a period of the sequence $\sq$.
	
	Now we prove that $q^2$ is not a period of  the sequence $\sq$. We can assume that $q^2$ is a period of  the sequence $\sq$.  Let $k=\lceil \frac{pq}{2(p+q-1)}\rceil$. It follows from (\ref{eqn3}) that
	$$\psi(-1 +kpq)=k(p+q-1) \pmod{pq}$$ and thus	
	$s_{-1+kpq}=1$.  This means that the sequence $\sq$ satisfies  $s_{-1+kpq+q^2}=s_{-1+kpq}=1$.  However,  
	we have $s_{-1+kpq+q^2}=0$ according to the definition of the sequence $\sq$ and $\gcd(kpq-1+q^2, pq)=\gcd(kpq+(q-1)(q+1), pq)= p.$ It follows that $s_{kpq-1+q^2}\neq s_{kpq-1}$, a contradiction. 
	
	Hence the least period of the sequence $\sq$ is $pq^2,$ which completes the proof of the lemma. 
\end{proof}

For any  integer $n\geq 2$, we denote by  $\Z_n=\{0,1,\cdots, n-1\}$ all representatives for the residue classes of integers modulo $n$ and by $\Z_n^*$ all representatives that are relatively prime to $n$ in $\Z_n,$ respectively. 
Since the least period of $\sq$ is $pq^2$,  we  restrict the action of $\psi$ on $\Z_{pq^2}$ sometimes. With a slight abuse of notation, we shall still use the same symbol $\psi$ to denote this restriction of the Euler quotient on  $\Z_{pq^2}$.

Let $g\in \Z_{pq^2}^*$ be a fixed common primitive root of both $p$ and $q^2$. The Chinese Reminder Theorem(CRT) \cite{dingyi1996chinese} guarantees that there exists  an element  $h$ of $\Z_{pq^2}^*$ such that
\[\left\{\begin{array}
{l@{\quad\quad}l}
h\equiv g \pmod{p},  \\
h\equiv 1 \pmod{q^2}.
\end{array}\right.\]

Put  $d=\gcd(p-1,q-1)$ and
$e=\textrm{lcm}(p-1,q-1)=(p-1)(q-1)/d$ where $\textrm{lcm}$ denotes least common multiple.
Then the unit group $\Z_{pq^2}^*$ of the ring $\Z_{pq^2}$ \cite{DING1998140} can be written as follows 
\begin{equation}
\\Z_{pq^2}^{*}=\{g^{i}h^{j}: 0 \leq i < qe, 0\leq j < d\}.
\end{equation}

The following lemma shows that the map $\psi$ is a group homomorphism when we restrict the action of the map $\psi$ to the unit group   $\Z_{pq^2}^*$. 
\begin{lemma}\label{lma2}
	\textit{	Let $\psi: t\rightarrow \psi(t)$ be the map from $\langle\Z_{pq^2}^*,\cdot\rangle$ to $\langle p\Z_{pq},+\rangle,$ where $p\Z_{pq}=\{lp\mid 0\leq l\leq q-1\}$ contains exactly all of the residue classes which are divisible by $p$ in the addition group $\Z_{pq}$.
		Then $\psi$ is a surjective group homomorphism.  }
	
	\textit{	Let $g$ and $h$ be defined as above. Then the image and kernel of $\psi$ are given as} $$\im(\psi)=p\Z_{pq}$$ \textit{and}  $$\ke(\psi)=\langle g^{q},h \rangle =\{g^{qi}h^j\mid 0\leq i<e, 0\leq j<d\},$$
	\textit{respectively.}
\end{lemma}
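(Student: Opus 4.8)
The plan is to establish, in order, the homomorphism property, the containment of the image, an exact count of the kernel, and finally surjectivity together with the identification of the kernel with $\langle g^q, h\rangle$. First I would prove that $\psi$ is a homomorphism. For integers $s,t$ coprime to $pq$, write $s^{\varphi(pq)}-1 = pq\,Q_s$ and $t^{\varphi(pq)}-1 = pq\,Q_t$ with $Q_s,Q_t\in\Z$, so that $\psi(s)\equiv Q_s$ and $\psi(t)\equiv Q_t \pmod{pq}$. Multiplying gives $(st)^{\varphi(pq)}-1 = pq\,(Q_s+Q_t+pq\,Q_sQ_t)$, and reducing modulo $pq$ kills the cross term $pq\,Q_sQ_t$; hence $\psi(st)\equiv\psi(s)+\psi(t)\pmod{pq}$. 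Since the period $pq^2$ is already known, $\psi$ descends to a well-defined group homomorphism $\langle\Z_{pq^2}^*,\cdot\rangle\to\langle\Z_{pq},+\rangle$, and the next step pins the image inside $p\Z_{pq}$.

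Next I would determine the image. The hypothesis $p\mid q-1$ lets me write $(p-1)(q-1)=m\,\varphi(p^2)$ with $q-1=pm$, so that $t^{(p-1)(q-1)}=\bigl(t^{\varphi(p^2)}\bigr)^m\equiv 1\pmod{p^2}$ for every $t$ coprime to $p$; combined with $t^{(p-1)(q-1)}\equiv 1\pmod{q}$ (Fermat), this yields $p^2q\mid t^{\varphi(pq)}-1$. Therefore the integer $\bigl(t^{\varphi(pq)}-1\bigr)/(pq)$ is divisible by $p$, i.e. $\im(\psi)\subseteq p\Z_{pq}$, a cyclic group of order $q$. This is precisely where the divisibility condition $p\mid q-1$ is essential.

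The heart of the argument is then an exact count of the kernel. An element $t$ lies in $\ke(\psi)$ iff $(pq)^2\mid t^{\varphi(pq)}-1$, i.e. $t^{(p-1)(q-1)}\equiv 1\pmod{p^2q^2}$; by the previous step the congruence modulo $p^2$ is automatic, so $\ke(\psi)=\{t\in\Z_{pq^2}^*: t^{(p-1)(q-1)}\equiv 1\pmod{q^2}\}$. Passing through the CRT isomorphism $\Z_{pq^2}^*\cong\Z_p^*\times\Z_{q^2}^*$, this set is $\Z_p^*\times H$, where $H$ is the subgroup of the cyclic group $\Z_{q^2}^*$ (of order $q(q-1)$, generated by $g$) annihilated by the exponent $(p-1)(q-1)$. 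Thus $|H|=\gcd\bigl((p-1)(q-1),\,q(q-1)\bigr)=(q-1)\gcd(p-1,q)=q-1$, using $\gcd(p-1,q)=1$ (which holds since $0<p-1<q$ and $q$ is prime); hence $|\ke(\psi)|=(p-1)(q-1)=ed$.

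Finally I would conclude. By the first isomorphism theorem, $|\im(\psi)|=|\Z_{pq^2}^*|/|\ke(\psi)|=q(p-1)(q-1)/(ed)=q$; since $\im(\psi)\subseteq p\Z_{pq}$ and $|p\Z_{pq}|=q$, the two coincide and $\psi$ is surjective onto $\langle p\Z_{pq},+\rangle$. For the explicit kernel I would verify $g^q,h\in\ke(\psi)$ directly — for $h$ because $h\equiv 1\pmod{q^2}$, and for $g^q$ because $g^q$ has order $q-1$ modulo $q^2$ with $(q-1)\mid (p-1)(q-1)$ — so that every $g^{qi}h^j$ lies in $\ke(\psi)$. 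By the bijective parametrization $\Z_{pq^2}^*=\{g^ih^j:0\le i<qe,\,0\le j<d\}$ already recorded, the $ed$ elements $g^{qi}h^j$ with $0\le i<e$ and $0\le j<d$ are pairwise distinct; as $|\ke(\psi)|=ed$, they exhaust the kernel, giving $\ke(\psi)=\langle g^q,h\rangle=\{g^{qi}h^j:0\le i<e,\,0\le j<d\}$. The main obstacle I anticipate is the exact kernel count: both the reduction of the modulus from $p^2q^2$ to $q^2$ (which relies on $p\mid q-1$) and the evaluation $\gcd\bigl((p-1)(q-1),q(q-1)\bigr)=q-1$ must be handled carefully, because surjectivity is forced only once the kernel order is known precisely.
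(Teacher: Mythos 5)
Your proof is correct, but it runs in the opposite logical order from the paper's and uses different tools at the key step. The paper first proves surjectivity constructively: writing $g^{q-1}=1+t_1q$ with $q\nmid t_1$ (since $g$ is a primitive root mod $q^2$), it computes $\psi(g)\equiv t_1(p-1)p^{-1}\not\equiv 0\pmod q$, so $\psi(g)=pa$ with $a\in\Z_q^*$ generates $p\Z_{pq}$; only then does it get $|\ke(\psi)|=(p-1)(q-1)$ from the first isomorphism theorem, and it identifies $|\langle g^q,h\rangle|=(p-1)(q-1)$ via a third-isomorphism-theorem coset argument. You instead count the kernel first, by observing that $\psi(t)=0$ is equivalent to $t^{(p-1)(q-1)}\equiv 1\pmod{p^2q^2}$, that the congruence mod $p^2$ is automatic because $p\mid q-1$ makes $(p-1)(q-1)$ a multiple of $\varphi(p^2)$, and that the number of solutions of $x^{(p-1)(q-1)}=1$ in the cyclic group $\Z_{q^2}^*$ is $\gcd\bigl((p-1)(q-1),q(q-1)\bigr)=q-1$; surjectivity then falls out of the index computation $|\im(\psi)|=q=|p\Z_{pq}|$, and the identification of the kernel with $\langle g^q,h\rangle$ follows from the unique parametrization $\Z_{pq^2}^*=\{g^ih^j\}$ rather than from isomorphism theorems. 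Your route is arguably cleaner (no need to compute $\psi(g)$ explicitly, no third isomorphism theorem), while the paper's explicit computation of $\psi(g)=pa$, $a\in\Z_q^*$, is reused immediately after the lemma to define $\g=g^b$ with $\psi(\g)=p$; with your argument that fact still follows (since $g\notin\langle g^q,h\rangle$ forces $\psi(g)\neq 0$ in $p\Z_{pq}$), but it is worth stating. The only loose phrasing is your appeal to ``the period $pq^2$'' for well-definedness of $\psi$ on $\Z_{pq^2}^*$: the precise fact needed is $\psi(t+pq^2)\equiv\psi(t)\pmod{pq}$, which the paper derives from (\ref{eqn3}) with $k=q$ and which again uses $p\mid q-1$.
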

\begin{proof}
	Note that $t^{p-1}\equiv 1 \pmod p$ for $t\in \Z_{pq^2}^*$. We can write $t^{p-1}=1+t{'}p$ for some integer $t'$. Substituting it into~(\ref{equa1}), we have
	$$\psi(t)=\frac{(1+t{'}p)^{q-1}-1}{pq}\equiv t{'}(q-1)q^{-1}\equiv 0 \pmod p$$
	as $p$ divides $q-1$. This means that $\psi(t)$ is divisible by $p$ and thus the map $\psi$ is well defined.
	
	For $u,v\in \Z_{pq^2}^*$ it follows from 
	 Euler's  Theorem that
	\begin{equation*}
	\begin{aligned}
	\psi(uv)=& \frac{(uv)^{\varphi{(pq)}}-1}{pq} \\
	=&\frac{(uv)^{\varphi{(pq)}}-u^{\varphi{(pq)}}+u^{\varphi{(pq)}}-1}{pq}\\
	=&u^{\varphi(pq)}\psi(v)+\psi(u)\\
	\equiv& \psi(u)+\psi(v)\pmod{pq}
	\end{aligned}
	\end{equation*}	
	which yields the map $\psi$ is a group homomorphism.
	
	Now we show that  the map $\psi$ is surjective. There exists some integer $t_1$ such that $g^{q-1}=1+t_1q$ with $t_1\not \equiv 0 \pmod q$ since $g$ is a primitive root in $\Z_{q^2}^*$. This implies that
	\begin{equation*}
	\begin{aligned}
	\psi(g)=&\frac{g^{\varphi(pq)}-1}{pq}\equiv \frac{(1+t_1q)^{p-1}-1}{pq}\\
	\equiv&
	t_1(p-1)p^{-1}\not\equiv 0 \pmod q.
	\end{aligned}
		\end{equation*}
	
	Note that $\psi(g)\equiv 0 \pmod p$. It follows from the CRT that  there exists some positive integer $a$ with $a\in \Z_q^*$ such that $$\psi(g)\equiv pa\not \equiv 0 \pmod{ pq}.$$
	It follows that $pa$ is one generator of the addition group $p\Z_{pq}$. Consequently, the map $\psi$ is surjective and  $\im(\psi)=p\Z_{pq}$.
	
	It is known that both $\psi(g)$ and   $\psi(h)$ are divisible by $p$. Also,
	$$\psi(g^q)=q\psi(g) = 0 \pmod q.$$
	On the basis of the CRT, we have $\psi(g^q)=0 \pmod{pq}$.
	Hence $g^q \in \ke(\psi)$. Observe that $h\equiv 1 \pmod{q^2}$. We can write $h=1+q^2h_1$. Hence
	$$\psi(h)= \frac{h^{\varphi(pq)}-1}{pq} \equiv p^{-1}\frac{(1+q^2h_1)^{\varphi(pq)}-1}{q} \equiv 0\!\! \pmod q\!\!.$$
	Combining the above equation with $\psi(h)=0\pmod p$, we get   $h\in \ke(\psi).$  Therefore, we have
	$$\{(g^q)^ih^j\!\!\!\!\!\pmod{pq^2}\!\!\mid\!\! 0\leq i<e,\! 0\!\leq\! j<d\}\!=\!\langle \!g^{q},\!h \!\rangle\! \subseteq\! \ke(\psi).$$
	
	Now we need to show that the kernal $\ke(\psi)$ 
	and the subgroup $\langle g^{q},h \rangle$ have the same cardinality. By the Third Isomorphism Theorem~\cite{Artin2011}, we have 
	$$\Z_{pq^2}^*/\langle g^{q},h \rangle \simeq   \bm{(}\langle g,h \rangle/\langle h \rangle\bm{)} / 
	\bm{(}\langle g^{q},h \rangle/\langle h \rangle \bm{)} \simeq  \langle g \rangle /\langle g^{q} \rangle. $$
	This yields that $g^0\langle g^{q},h \rangle, g^1\langle g^{q},h \rangle,\cdots, g^{q-1}\langle g^{q},h \rangle$ are all cosets of the subgroup  $\langle g^{q},h \rangle$ of $\Z_{pq^2}^*$. It follows  that 
	$|\langle g^{q}, h  \rangle|=(p-1)(q-1)$. On the other hand, 
	according to the Fundamental Homomorphism Theorem~\cite{Artin2011}, we see that
	$$|\ke(\psi)|=|\frac{\Z_{pq^2}^*}{\im(\psi) }|=\frac{(p-1)(q-1)q}{q}=(p-1)(q-1)$$
	and so $\langle  g^{q}, h \rangle = \ke(\psi)$. This completes the proof of the lemma. $ $ \end{proof}	
Note that Lemma~\ref{lma2} gives that $\psi(g)=pa \pmod{pq}$ with some $a\in \Z_q^*$. This means that $\psi(g)=pa \pmod q$ by the CRT.
Let $b$ be the inverse of $a$ in $\Z_q^*$, i.e., $ab\equiv 1 \pmod q$. Define $\g=g^b$ in $\Z_{pq^2}^{*}$. Then
$$\psi(\g)=b
\cdot \psi(g) \pmod{pq}$$ by the homomorphism property of the map  $\psi$. It follows from $\psi(g)\equiv pa \pmod q$  that $$\psi(\g)\equiv pab\equiv p \pmod q.$$ Combining the above equality with $\psi(\g)=0 \pmod p$, we get $\psi(\g)=p \pmod{pq}$. 
The following lemma describes a partition of $\Z_{pq^2}^*$ which will give a new explanation of the definition of the sequence $\sq$. 

\begin{lemma}\label{lma3}
	Let $\g$ be an element in $\Z_{pq^2}^{*}$ with $\psi(\g)=p \pmod{pq}.$
	Define
	\begin{equation*}
	D_{\ell}=\{t: \psi(t)=p\ell \pmod{pq}, t\in{\\Z_{pq^2}^{*}}\}
	\end{equation*} and
	$$\hat{D}_{\ell}=\g^\ell D_0=\{\g^\ell\cdot t \pmod{pq^{2}}: t\in D_0\}$$ for $\ell=0, 1,\cdots, q-1.$
	Then $\Z_{pq^2}^{*}=\bigcup\limits_{\ell=0}^{q-1}D_{\ell}$ and $D_{\ell}=\hat{D}_{\ell}
	$ for all $\ell \in \Z_q$.

\end{lemma}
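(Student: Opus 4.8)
The plan is to read off both assertions directly from the group-homomorphism structure of $\psi$ established in Lemma~\ref{lma2}. The first assertion, that $\Z_{pq^2}^*=\bigcup_{\ell=0}^{q-1}D_\ell$, is essentially a restatement of the surjectivity $\im(\psi)=p\Z_{pq}$. Since $p\Z_{pq}=\{p\ell\mid 0\leq \ell\leq q-1\}$ has exactly $q$ elements, every $t\in\Z_{pq^2}^*$ satisfies $\psi(t)\equiv p\ell\pmod{pq}$ for a unique $\ell\in\{0,1,\dots,q-1\}$, so the sets $D_\ell$ cover $\Z_{pq^2}^*$ (and in fact partition it). In particular $D_0=\ke(\psi)$.

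For the more substantive claim $D_\ell=\hat{D}_\ell$, I would first pin down $\psi(\g^\ell)$. Because $\psi$ is a homomorphism and $\psi(\g)\equiv p\pmod{pq}$, we get $\psi(\g^\ell)\equiv \ell\,\psi(\g)\equiv p\ell\pmod{pq}$ for each $\ell$, and these residues $p\ell$ are pairwise distinct modulo $pq$ for $0\le\ell<q$.

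The inclusion $\hat{D}_\ell\subseteq D_\ell$ then follows by a one-line computation: for any $t\in D_0=\ke(\psi)$, the homomorphism property gives $\psi(\g^\ell t)\equiv \psi(\g^\ell)+\psi(t)\equiv p\ell+0\equiv p\ell\pmod{pq}$, so $\g^\ell t\in D_\ell$. For the reverse inclusion $D_\ell\subseteq\hat{D}_\ell$, I would take an arbitrary $s\in D_\ell$ and set $t=\g^{-\ell}s\in\Z_{pq^2}^*$; the same computation shows $\psi(t)\equiv \psi(s)-\psi(\g^\ell)\equiv p\ell-p\ell\equiv 0\pmod{pq}$, so $t\in D_0$ and hence $s=\g^\ell t\in\g^\ell D_0=\hat{D}_\ell$. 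Combining the two inclusions yields $D_\ell=\hat{D}_\ell$.

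I do not expect a genuine obstacle here: the sets $D_\ell$ are exactly the fibers of the homomorphism $\psi$, and the content of the lemma is the familiar fact that the fibers of a group homomorphism are the cosets of its kernel, here realized concretely as $\g^\ell\,\ke(\psi)$. The only points that require mild care are the well-definedness of the labelling (checking that the $p\ell$ are distinct residues modulo $pq$ for $0\le\ell<q$, which justifies that the $D_\ell$ are disjoint) and keeping all arithmetic inside the additive group $p\Z_{pq}$, both of which are immediate from Lemma~\ref{lma2}.
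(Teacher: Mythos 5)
Your proposal is correct and follows essentially the same route as the paper: both use the homomorphism property from Lemma~\ref{lma2} to show $\psi(\g^\ell t_0)=p\ell$ for $t_0\in D_0=\ke(\psi)$ (giving $\hat{D}_\ell\subseteq D_\ell$) and, conversely, that $t\in D_\ell$ implies $\psi(t\g^{-\ell})=0$, i.e.\ $t\g^{-\ell}\in\ke(\psi)$ (giving $D_\ell\subseteq\hat{D}_\ell$). Your additional remarks on the covering claim via $\im(\psi)=p\Z_{pq}$ and the distinctness of the residues $p\ell$ are fine and merely make explicit what the paper leaves implicit.
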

\begin{proof}	
	We first prove that $\hat{D}_{\ell}= D_{\ell}$ for all $\ell\in \Z_q$. Note that Lemma~\ref{lma2} gives that $D_{0}=\ke(\psi).$ It is easy  to see that for $\g^\ell t_0\in \hat{D}_{\ell}$ with $t_0\in D_0=\ke(\psi)$. We have
	$$\psi(\g^\ell t_0)=l\cdot \psi(\g)+\psi(t_0)=\ell p \pmod{pq}. $$	This implies that $\hat{D}_{\ell}\subseteq D_{\ell}$. Conversely, for $t\in D_{\ell}$, we have 
	$$\psi(t)=pl=l\psi(\g)=\psi(\g^\ell) \pmod{pq}$$
	and thus $$\psi(\frac{t}{\g^\ell})=0 \pmod{pq}$$ by
	the homomorphism property  of $\psi$. This means that $$\frac{t}{\g^\ell}\in \ke(\psi)=D_0.$$  Therefore, there exists some element $t_0\in D_0$ such that
	$$\frac{t}{\g^\ell }\equiv t_0 \pmod{pq^{2}}.$$ Hence we have $t=\g^\ell\cdot t_0\in \g^\ell D_0=\hat{D}_{\ell}$ and so ${D}_{\ell}=\hat{D}_{\ell}$.  This completes the whole proof of the lemma.$ $
\end{proof}

%
%


By the definition of $D_{\ell}$ and $\hat{D}_{\ell}$ , Lemma~\ref{lma3} gives that $|D_{\ell}|=|\hat{D}_{\ell}|=(p-1)(q-1)$ for $\ell=0, 1,\cdots, q-1.$
Let $P= \{t: t\in Z_{pq^2}, \gcd(t,pq)\neq 1\}$. The sequence $\sq$ can be rewritten as 
\[s_t= \left\{\begin{array}
{l@{\quad\quad}l}
0  &\textrm{if} ~~~ t\in D_0\cdots \cup D_{(q-1)/2}\cup P,\\
1 & \textrm{if} ~~~ t\in  D_{(q+1)/2}\cup \cdots D_{q-1}.
\end{array}\right.\]
The new explanation of the sequence $\sq$ will be helpful to determine linear complexities. 
We will make extensive use of the following  lemmas for completing the proof of Theorem~\ref{th1}.

\begin{lemma}\label{lma4}
	\textit{	
		For any $0\leq i < q,$ if $u\pmod{pq^2}\in{D_{j}}$ for some $0\leq j< q,$ we have
		\begin{equation*}
		uD_{i}=\{uv \pmod{pq^2}: v\in{D_{i}}\}=D_{i+j}.
		\end{equation*}
		where all the subscripts are certainly understood  modulo $q$. In particular, $D_{ri}=\g^{r}D_{i}$ for $0\leq r \leq q-1$.
	}
\end{lemma}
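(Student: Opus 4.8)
The plan is to reduce the entire statement to the additive homomorphism property of $\psi$ established in Lemma~\ref{lma2}. First I would fix an arbitrary $v \in D_{i}$, so that $\psi(v) \equiv pi \pmod{pq}$ by the definition of $D_{i}$, and recall that $\psi(u) \equiv pj \pmod{pq}$ since $u \in D_{j}$. Applying the homomorphism identity $\psi(uv) \equiv \psi(u) + \psi(v) \pmod{pq}$ from Lemma~\ref{lma2} immediately gives $\psi(uv) \equiv p(i+j) \pmod{pq}$. The key observation is that the subscript only matters modulo $q$: because $pq \equiv 0 \pmod{pq}$, we have $p(i+j) \equiv p\bigl((i+j) \bmod q\bigr) \pmod{pq}$, so $uv \in D_{i+j}$ with the subscript read modulo $q$. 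Since $v$ was arbitrary, this yields the inclusion $uD_{i} \subseteq D_{i+j}$.

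To promote this inclusion to an equality I would use a counting argument. The map $v \mapsto uv \pmod{pq^2}$ is a bijection of $\Z_{pq^2}^{*}$ onto itself because $u$ is a unit, hence it is injective on $D_{i}$ and so $|uD_{i}| = |D_{i}|$. By the cardinality computation recorded immediately after Lemma~\ref{lma3} we know $|D_{i}| = |D_{i+j}| = (p-1)(q-1)$. A finite set contained in another set of the same cardinality must coincide with it, so $uD_{i} = D_{i+j}$, as claimed.

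Finally, the special case $\g^{r}D_{i} = D_{r+i}$ follows by specializing $u = \g^{r}$. Since $\psi(\g) \equiv p \pmod{pq}$, that is $\g \in D_{1}$, the homomorphism property gives $\psi(\g^{r}) \equiv rp \pmod{pq}$, so $\g^{r} \in D_{r}$; taking $j = r$ in the main statement then produces the asserted identity (with the subscript again understood modulo $q$).

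I expect no genuinely hard step here, as the whole argument is driven by Lemma~\ref{lma2}; the two points that require care are the reduction of the subscript modulo $q$, which rests on $p \cdot q \equiv 0 \pmod{pq}$, and the fact that every block $D_{\ell}$ shares the common size $(p-1)(q-1)$, without which the inclusion could not be upgraded to an equality. If there is any obstacle at all it is purely bookkeeping: making sure that the index arithmetic $i+j$ is consistently interpreted in $\Z_q$ throughout, so that the statement remains correct even when $i+j \geq q$.
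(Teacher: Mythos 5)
Your proposal is correct and follows essentially the paper's route: both arguments rest on the homomorphism property of $\psi$ from Lemma~\ref{lma2} (equivalently, the coset description $D_{\ell}=\g^{\ell}D_{0}$ of Lemma~\ref{lma3}) to get $uD_{i}\subseteq D_{i+j}$, and your cardinality count $|uD_{i}|=|D_{i}|=|D_{i+j}|=(p-1)(q-1)$ is just a slightly more explicit way of obtaining the reverse inclusion that the paper dispatches by multiplying by $u^{-1}$. Your handling of the special case via $\g\in D_{1}$, hence $\g^{r}\in D_{r}$, is likewise the intended reading.
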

\begin{proof} If $v\in D_i$, then $u=\g^ju_0$ and  $v=\g^iv_0$ with $u_0,~v_0\in D_0$. Hence
	$uv=\g^{i+j}u_0v_0\in \g^{i+j}D_0=D_{i+j}.$
	This implies that $uD_i\subseteq D_{i+j}$. Conversely, it can be seen easily that $D_{i+j}\subseteq uD_i$. This finishes the proof of the lemma.$ $
\end{proof}

The study of the behavior of the coset $D_{\ell}$  modulo various divisors of $pq^2$ leads to a number of useful lemmas. 

\begin{lemma}\label{lma5}
	\textit{		
		For $0\leq \ell < q,$ we have the following two multiset equalities
		$$\{u \pmod{p}: u\in{D_{\ell}}\}= (q-1)*\Z_{p}^*,$$ where   $(q-1)*\Z_{p}^*$ is the multiset in which each element of $\Z_{p}^*$ appears with multiplicity
		$q-1$, and $$\{u \pmod{q}: u\in{D_{\ell}}\}=(p-1)*\Z_{q}^*,$$ where   $(p-1)*\Z_{q}^*$ is the multiset in which each element of $\Z_{q}^*$ appears with multiplicity
		$p-1$.}
\end{lemma}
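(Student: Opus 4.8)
The plan is to reduce both claims to the single case $\ell=0$ and then read off the multiplicities by counting fibres of the two reduction maps. First I would invoke Lemma~\ref{lma3}, which gives $D_{\ell}=\g^{\ell}D_{0}$, so that $t\mapsto \g^{\ell}t$ is a bijection from $D_{0}$ onto $D_{\ell}$. Reducing this bijection modulo $p$ shows that the multiset $\{u\bmod p: u\in D_{\ell}\}$ is the image of $\{u\bmod p: u\in D_{0}\}$ under multiplication by the fixed unit $\g^{\ell}\bmod p\in\Z_{p}^{*}$; since multiplication by a unit merely permutes $\Z_{p}^{*}$, every multiplicity is preserved. The same reasoning applies modulo $q$. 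Hence it suffices to establish both multiset equalities for $\ell=0$.

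For the modulus $p$ I would regard reduction modulo $p$ as a surjective group homomorphism $\pi_{p}\colon\langle\Z_{pq^{2}}^{*},\cdot\rangle\to\langle\Z_{p}^{*},\cdot\rangle$ and restrict it to the subgroup $D_{0}=\ke(\psi)=\langle g^{q},h\rangle$. The key observation is that $\pi_{p}$ stays surjective on $D_{0}$: because $h\equiv g\pmod{p}$ and $g$ is a primitive root modulo $p$, the image $\pi_{p}(h)=g\bmod p$ already generates all of $\Z_{p}^{*}$. By the first isomorphism theorem the nonempty fibres of $\pi_{p}|_{D_{0}}$ are exactly the cosets of $D_{0}\cap\ke(\pi_{p})$, so they all have the common size $|D_{0}|/(p-1)=(p-1)(q-1)/(p-1)=q-1$. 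Thus every element of $\Z_{p}^{*}$ is attained exactly $q-1$ times, which is the first claimed equality.

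The modulus-$q$ case runs in parallel with $\pi_{q}\colon\Z_{pq^{2}}^{*}\to\Z_{q}^{*}$. Here I would verify surjectivity of $\pi_{q}|_{D_{0}}$ using the other generator: $\pi_{q}(g^{q})=(g\bmod q)^{q}$, and since $g$ is a primitive root modulo $q^{2}$ it is also a primitive root modulo $q$, so $g\bmod q$ has order $q-1$; as $\gcd(q,q-1)=1$, its $q$-th power again has order $q-1$ and therefore generates $\Z_{q}^{*}$. The fibres of $\pi_{q}|_{D_{0}}$ then all have size $|D_{0}|/(q-1)=p-1$, which yields the second equality.

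The step I expect to require the most care is the surjectivity of the two restricted reduction maps, since this is precisely where the construction of $g$ and $h$ enters: one must confirm that the generators $g^{q}$ and $h$ of $D_{0}$ reduce to generators of $\Z_{p}^{*}$ and of $\Z_{q}^{*}$ respectively, and in particular that $g$ being a primitive root modulo $q^{2}$ forces it to be one modulo $q$. Once surjectivity is secured, the equal-size-fibre count is immediate, and combined with the reduction to $\ell=0$ both displayed multiset identities follow at once.
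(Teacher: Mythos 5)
Your proof is correct, but it takes a more structural route than the paper. The paper argues by brute-force counting: it writes each $u\in D_{\ell}$ as $\g^{\ell}g^{qi}h^{j}$ with $i\in\Z_{e}$, $j\in\Z_{d}$, reduces this parametrization modulo $p$ (resp.\ $q$), and explicitly counts the solutions of the resulting congruence $i\equiv q^{-1}(a_{0}-j_{0}-b\ell)\pmod{p-1}$ inside $\Z_{e}$ (getting $\frac{q-1}{d}$ solutions, times $d$ choices of $j_{0}$), so the uniform multiplicity $q-1$ comes out of an explicit count. You instead use Lemma~\ref{lma2} to view $D_{0}=\ke(\psi)=\langle g^{q},h\rangle$ as a subgroup of order $(p-1)(q-1)$, check that the reduction maps restricted to $D_{0}$ are surjective onto $\Z_{p}^{*}$ and $\Z_{q}^{*}$ (via $h\equiv g\pmod{p}$ and via $g^{q}\bmod q$ still having order $q-1$ since $\gcd(q,q-1)=1$ and a primitive root mod $q^{2}$ reduces to one mod $q$), and then get equal fibre sizes $q-1$ and $p-1$ for free from the first isomorphism theorem, handling general $\ell$ by the translation $D_{\ell}=\g^{\ell}D_{0}$ of Lemma~\ref{lma3}, which merely permutes $\Z_{p}^{*}$ (resp.\ $\Z_{q}^{*}$). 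The trade-off: the paper's computation is self-contained at the level of exponent arithmetic and displays exactly where $\gcd(q,p-1)=1$ enters, while your argument is shorter, makes the uniformity of multiplicities automatic rather than computed, and isolates the only real inputs (the cardinality of $\ke(\psi)$ and the reduction behaviour of the generators $g^{q}$ and $h$); both are fully rigorous.
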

\begin{proof}Note that 
	 $u\in{D_{\ell}}$ can be written as $u=\g^{\ell}g^{qi}h^{j}$ for $0\leq i < e$ and $0\leq j < d.$ Recall that $\g=g^b$ with some fixed $b\in \Z_q^*$ in Lemma~\ref{lma3}. Then $u=g^{qi+b\ell}h^j$ in $\Z_{pq^2}$ and so
	%
	$$u=\g^{\ell}g^{qi}h^{j}\equiv g^{qi+b\ell+j}\equiv g^{b\ell+j}\cdot (g^q)^i \pmod{p}.$$
	According to $\gcd{(p-1, q)}=1,$ we see that $g^q$ is also a primitive root of $\Z_p^*$.
	If we fix some $j_0\in\Z_d$, then $u\equiv g^{qi+b\ell +j_0}\pmod{p}$ runs through $\Z_p^*$ when $i$ runs through $\Z_e$. Now we count the multiplicity of each element in $\Z_p^*$ when
	$i$ and $j$ run through $\Z_e$ and $\Z_d$   respectively. Assume that
	$$u\equiv g^{qi+b\ell +j_0}\equiv g^{a_0}\pmod{p}$$
	where $0\leq a_0 \leq p-2.$
	This means that $$qi=a_0-j_0-b\ell \pmod{p-1}$$ for $i\in \Z_e$.
	According to $\gcd{(p-1, q)}=1,$ it is equivalent to
	$$i\equiv q^{-1}(a_0-j_0-b\ell) \pmod{p-1}.$$
	There exist $\frac{q-1}{d}$ many solutions in the form of
	$i_0, i_0+(p-1), \cdots, i_0+(\frac{q-1}{d}-1)(p-1).$
	Note that $j_0$ has $d$ choices. This implies that there are  $(q-1)$ many elements of $D_{\ell}$ mapping into one element in $\Z_{p}^*.$
	In a similar manner, we can prove the second multiset equality in the lemma. This completes the  proof of the lemma. $ $
	%
\end{proof}

\begin{lemma}\label{lma6}
	\textit{For $0\leq \ell < q,$ we have
		$$\{u \pmod{pq}: u\in{D_{\ell}}\}=\Z_{pq}^*.$$
	}
\end{lemma}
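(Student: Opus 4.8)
The plan is to show that reduction modulo $pq$ maps each coset $D_{\ell}$ bijectively onto the unit group $\Z_{pq}^*$. Since $|D_{\ell}|=(p-1)(q-1)=|\Z_{pq}^*|$ by the remark following Lemma~\ref{lma3}, it suffices to prove that the reduction map is \emph{injective} on $D_{\ell}$; surjectivity then follows automatically from equality of cardinalities. Equivalently, I would show that reduction is surjective, again invoking the cardinality count to upgrade this to a bijection.

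First I would reduce to the case $\ell=0$. By Lemma~\ref{lma4} we have $D_{\ell}=\g^{\ell}D_0$, and since $\gcd(\g,pq)=1$, multiplication by the fixed unit $\g$ permutes $\Z_{pq}^*$; hence the multiset $\{u\bmod pq: u\in D_{\ell}\}$ is simply $\g^{\ell}$ times the multiset for $D_0$. So it is enough to handle $D_0=\ke(\psi)=\langle g^q,h\rangle$. Now the natural tool is the Chinese Remainder Theorem: $\Z_{pq}^*\simeq \Z_p^*\times \Z_q^*$, so a unit mod $pq$ is determined by its residues mod $p$ and mod $q$. For $u=(g^q)^i h^j$ with $0\le i<e$, $0\le j<d$, I would compute these two residues explicitly using the defining congruences $h\equiv g\pmod p$, $h\equiv 1\pmod{q^2}$ (so $h\equiv 1\pmod q$), giving $u\equiv g^{qi+j}\pmod p$ and $u\equiv (g^q)^i\equiv g^{qi}\pmod q$.

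The core counting step is then to verify that the pair of exponents $(qi+j \bmod (p-1),\; qi \bmod (q-1))$ realizes every element of $\Z_p^*\times \Z_q^*$ as $i$ ranges over $\Z_e$ and $j$ over $\Z_d$. This is essentially the computation already carried out in the proof of Lemma~\ref{lma5}, where the residue mod $p$ alone was analyzed; here I would run the two congruences simultaneously. Using $\gcd(p-1,q)=1$ and $\gcd(q-1,q)=1$ (so $g^q$ is a primitive root mod both $p$ and $q$), I would solve the system: the second congruence pins down $i$ modulo $q-1$, and since $e=\mathrm{lcm}(p-1,q-1)$ the residue of $i$ modulo $q-1$ ranges over all values $e/(q-1)=(p-1)/d$ times; the remaining freedom in $i$ modulo $p-1$ together with the $d$ choices of $j$ then lets $qi+j$ hit every residue mod $p-1$. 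A clean way to package this is to observe that the map $D_0\to\Z_p^*\times\Z_q^*$ is a group homomorphism (CRT reduction restricted to $D_0$), so its image is a subgroup; I would show the image contains generators of both factors, forcing the image to be everything, and then invoke $|D_0|=|\Z_{pq}^*|$ to conclude the map is a bijection (in particular each target element is hit exactly once, i.e.\ the multiset is genuinely $\Z_{pq}^*$ with multiplicity one).

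I expect the main obstacle to be confirming that reduction is injective rather than merely covering $\Z_{pq}^*$ with some uniform multiplicity. The danger is that $\langle g^q,h\rangle$ reduced mod $pq$ could in principle collapse, so that each unit appears with multiplicity greater than one while missing others; ruling this out is exactly what the simultaneous CRT analysis accomplishes, and it hinges on the coprimality relations $\gcd(p-1,q)=\gcd(q-1,q)=1$ and on $e=\mathrm{lcm}(p-1,q-1)$ correctly interleaving the two exponent progressions. Once the homomorphism $D_0\to\Z_p^*\times\Z_q^*$ is shown to be surjective, injectivity is free from the cardinality match, which is the cleanest route and avoids grinding through the explicit solution count.
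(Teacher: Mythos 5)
Your proposal is correct, but it runs in the opposite direction from the paper's proof. The paper fixes $\ell$, writes two elements $u_1=\g^{\ell}g^{qi_1}h^{j_1}$, $u_2=\g^{\ell}g^{qi_2}h^{j_2}$ of $D_{\ell}$ that agree modulo $pq$, and deduces via the CRT and exponent congruences (first modulo $d$ to get $j_1=j_2$, then modulo $p-1$ and $q-1$, hence modulo $e=\mathrm{lcm}(p-1,q-1)$, to get $i_1=i_2$) that the reduction map is \emph{injective}; surjectivity then comes for free from $|D_{\ell}|=(p-1)(q-1)=|\Z_{pq}^*|$. You instead reduce to $\ell=0$ via $D_{\ell}=\g^{\ell}D_0$, prove \emph{surjectivity} of the reduction $D_0=\langle g^q,h\rangle\to\Z_p^*\times\Z_q^*$, and get injectivity from the same cardinality count. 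Both routes are valid and of comparable length; the paper's uniqueness-of-exponents argument is slightly more mechanical, while yours exposes the group structure of the image and reuses the counting pattern of Lemma~\ref{lma5}. One caution on your ``clean packaging'': knowing that the image of $D_0$ is a subgroup of $\Z_p^*\times\Z_q^*$ whose two projections are surjective does \emph{not} by itself force the image to be the whole product (think of a diagonal subgroup), so ``contains generators of both factors'' must mean the stronger statement that the image contains elements of the form $(x,1)$ and $(1,y)$ with $x$, $y$ generators. Here this is easy to supply: $h$ reduces to $(g\bmod p,\,1)$, which generates the first factor, and multiplying the reduction $(g^q\bmod p,\,g^q\bmod q)$ of $g^q$ by a suitable power of $(g\bmod p,\,1)$ isolates $(1,\,g^q\bmod q)$, where $g^q$ is a primitive root modulo $q$ since $\gcd(q,q-1)=1$. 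With that point made explicit (or with your alternative explicit solution count of the system $qi+j\equiv a\pmod{p-1}$, $qi\equiv b\pmod{q-1}$ carried through), your argument is complete.
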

\begin{proof} It is obvious that the map from $D_{\ell}$ to $\Z_{pq}^*$ with $u\rightarrow u\pmod{pq}$ is well-defined. Thus
	it is sufficient to prove that the map is  one-to-one since both $D_{\ell}$ and $\Z_{pq}^*$ have the same cardinality.
	
	For $u_1,~u_2\in{D_{\ell}},$ we write $u_1=\g^{\ell}g^{qi_1}h^{j_1}$ and $u_2=\g^{\ell}g^{qi_2}h^{j_2}$ with $i_1,~i_2\in \Z_{e}$ and $j_1,~j_2\in \Z_d$ respectively. Assume that
	$$\g^{\ell}g^{qi_1}h^{j_1}=u_1 \equiv u_2=\g^{\ell}g^{qi_2}h^{j_2}\pmod{pq}. $$
	We will illustrate that $i_1=i_2$ and $j_1=j_2$.
	
	Note that
	$$g^{qi_1}h^{j_1}=g^{qi_2}h^{j_2}\pmod{ pq} $$
	as $\gcd(\g,pq)=1$. It follows from the CRT that
	\[\left\{\begin{array}
	{l@{\quad\quad}l}
	g^{qi_1+j_1}=g^{qi_2+j_2} \pmod{p},  \\
	g^{qi_1}=g^{qi_2} \pmod{q}.
	\end{array}\right.\]
	This implies that
	\[\left\{\begin{array}
	{l@{\quad\quad}l}
	qi_1+j_1\equiv qi_2+j_2 \pmod{p-1},  \\
	qi_1 \equiv qi_2 \pmod{q-1}.
	\end{array}\right.\]
	Note that $d=\gcd(p-1,q-1)$.  It follows from the above equality that
	\[\left\{\begin{array}
	{l@{\quad\quad}l}
	qi_1+j_1\equiv qi_2+j_2 \pmod{d},  \\
	qi_1 \equiv qi_2 \pmod{d}.
	\end{array}\right.\]
	This gives that $$j_1\equiv j_2 \pmod d.$$
	Since $j_1$ and $j_2$ belong to $\Z_d$, we have $j_1=j_2$. In the following, we will show that $i_1=i_2$ on the basis of the fact that $j_1=j_2$. Now we have
	\[\left\{\begin{array}
	{l@{\quad\quad}l}
	qi_1\equiv qi_2 \pmod{p-1},  \\
	qi_1 \equiv qi_2 \pmod{q-1}.
	\end{array}\right.\]
	Since $\gcd(q,p-1)=\gcd(q,q-1)=1$, it follows that
	\[\left\{\begin{array}
	{l@{\quad\quad}l}
	i_1\equiv i_2 \pmod{p-1},  \\
	i_1 \equiv i_2 \pmod{q-1}.
	\end{array}\right.\]
	Recall $e=(p-1)(q-1)/d=\textrm{lcm}(p-1,q-1)$.
	It follows from the above equations that 
	$$i_1=i_2 \pmod e.$$
	Since $i_1$ and $i_2$ belong to $\Z_e$, we have $i_1=i_2$. This completes the proof of the lemma.
	$ $ 
\end{proof}

\begin{lemma}\label{lma7}
	Let $\g,~g\in \Z_{pq^2}^*$ be the same notations as above.
	For $0\leq \ell < q,$ we have the following multiset equality
	$$\{u \pmod{q^2}: u\in{D_{\ell}}\}=(p-1) * \mathfrak{\g}^{\ell}\langle \mathfrak{g}^{q}\rangle,
	$$ where $\mathfrak{\g}$ and $\mathfrak{g}$ denote $(\g\pmod{q^2})$ and $(g\pmod{q^2}),$ respectively. The set $\mathfrak{\g}^{\ell}\langle \mathfrak{g}^{q}\rangle$ is contained in $ \Z_{q^2}^*$ and  $(p-1) * \mathfrak{\g}^{\ell}\langle \mathfrak{g}^{q}\rangle$ is the multiset in which each element of $\mathfrak{\g}^{\ell}\langle \mathfrak{g}^{q}\rangle$ appears with multiplicity $p-1$.
\end{lemma}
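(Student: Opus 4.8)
The plan is to reuse the coset parametrization of $D_{\ell}$ from the proof of Lemma~\ref{lma5} and then to reduce modulo $q^2$, where the factor $h$ collapses. First I would write each $u\in D_{\ell}$ as $u=\g^{\ell}g^{qi}h^{j}$ with $0\leq i<e$ and $0\leq j<d$; this is exactly the description $D_{\ell}=\g^{\ell}\ke(\psi)=\g^{\ell}\langle g^q,h\rangle$ supplied by Lemma~\ref{lma2} and Lemma~\ref{lma3}. Since $h\equiv 1\pmod{q^2}$, every factor $h^{j}$ disappears upon reduction, leaving $u\equiv \g^{\ell}g^{qi}\pmod{q^2}$. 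In particular the residue does not depend on $j$ at all, so each of the $d$ choices of $j$ contributes the same class, which already accounts for a multiplicity factor of $d$.

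Next I would analyze the set traced out by $g^{qi}\bmod q^2$ as $i$ runs over $\Z_{e}$. Because $g$ is a primitive root modulo $q^2$, it has order $\varphi(q^2)=q(q-1)$, so the order of $g^{q}$ modulo $q^2$ is $q(q-1)/\gcd(q,q(q-1))=q-1$. Hence $\langle g^{q}\rangle$ is a subgroup of $\Z_{q^2}^{*}$ with exactly $q-1$ elements, all of them units, and $g^{qi}$ depends only on $i\bmod(q-1)$. Recalling $e=\mathrm{lcm}(p-1,q-1)=(p-1)(q-1)/d$, we have $q-1\mid e$ with $e/(q-1)=(p-1)/d$, so as $i$ ranges over $\Z_{e}$ the power $g^{qi}$ sweeps out all of $\langle g^{q}\rangle$, each element being attained exactly $(p-1)/d$ times. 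Multiplying through by the fixed unit $\g^{\ell}$ then converts $\langle g^{q}\rangle$ into the coset $\g^{\ell}\langle g^{q}\rangle\subseteq\Z_{q^2}^{*}$ without disturbing any of these multiplicities.

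Combining the two counts, the multiplicity $(p-1)/d$ coming from $i$ with the multiplicity $d$ coming from $j$, gives a uniform multiplicity of $p-1$ on each of the $q-1$ elements of $\g^{\ell}\langle g^{q}\rangle$, and the total $(p-1)(q-1)$ agrees with $|D_{\ell}|$ as a consistency check. The delicate point, and the step I would state most carefully, is the order computation $\mathrm{ord}_{q^2}(g^{q})=q-1$ together with the divisibility $q-1\mid e$; once these are pinned down, the uniformity of the multiplicity $p-1$ follows by exactly the same counting bookkeeping already carried out for $\Z_{p}^{*}$ in Lemma~\ref{lma5}, rather than by any new idea.
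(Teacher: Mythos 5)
Your proposal is correct and follows essentially the same route as the paper: parametrize $D_{\ell}$ as $\{\g^{\ell}g^{qi}h^{j}\}$, note $h\equiv 1\pmod{q^2}$ kills the $h^{j}$ factor, and count that each element of $\mathfrak{\g}^{\ell}\langle\mathfrak{g}^{q}\rangle$ is hit $(p-1)/d$ times over $i\in\Z_e$ and $d$ times over $j\in\Z_d$, giving multiplicity $p-1$. Your explicit computation of $\mathrm{ord}_{q^2}(g^{q})=q-1$ and the divisibility $q-1\mid e$ just makes precise the counting step the paper carries out via the congruence $i\equiv a_0\pmod{q-1}$, so there is no substantive difference.
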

\begin{proof}
	Note that
	$$u=\g^{\ell}g^{qi}h^{j}\equiv \mathfrak{\g}^{\ell}\mathfrak{g}^{qi}\cdot 1\equiv \mathfrak{\g}^{\ell}(\mathfrak{g}^q)^{i} \pmod{q^2}.$$
	This means that $(u \pmod{q^2})$ belongs to $\mathfrak{\g}^{\ell}\langle \mathfrak{g}^{q}\rangle$ indeed.  So the map from $D_{\ell}$ to $\mathfrak{\g}^{\ell}\langle \mathfrak{g}^{q}\rangle$ with $u\rightarrow u \pmod {q^2}$  is well-defined. Now we count the multiplicity when $u$ runs through the set $D_{\ell}$. Assume that
	$$\mathfrak{\g}^{\ell}(\mathfrak{g}^q)^{i}\equiv \mathfrak{\g}^{\ell}(\mathfrak{g}^q)^{a_0} \pmod{q^2}$$
	for some fixed $a_0\in \Z_{q-1}.$ It follows that
	$$qi\equiv qa_0 \pmod {q-1},$$
	i.e.,
	$$i\equiv a_0 \pmod{q-1}.$$
	There exist $\frac{p-1}{d}$ many solutions for $i\in \Z_{e}$ in the form of
	$a_0, a_0+(q-1), \cdots, a_0+(\frac{p-1}{d}-1)(q-1).$
	Note that $j\in \Z_d$ has $d$ choices. Altogether, there are $(p-1)$ many
	elements of $D_{\ell}$ mapping into one element in $\mathfrak{\g}^{\ell}\langle \mathfrak{g}^{q}\rangle.$ This finishes the proof of the lemma.
\end{proof}

Define  $D_{\ell}(x)=\sum\limits_{u\in{D_{\ell}}}x^{u}\in{\mathbb{F}_2[x]}.$
There exists an important connection between the polynomial $D_{\ell}(x)$ and the cyclotomic polynomial $\Phi_{n}(x)$ that will allow us to determine the minimal polynomial of sequences $\sq.$
\begin{lemma}\label{lma8}
	Let $\gamma$ be a fixed $pq^2$-th primitive root of unity in the algebraic closure of ${\F}_2$ and $v$  an element in $\Z_{pq^2}$.  Then
	\[
	D_{\ell}(\gamma^v)= 
	\left\{\begin{array}
	{l@{\quad\quad}l}
	1 &\textrm{if} ~~~ \gcd(v,pq^2)=q, \\
	0  &\textrm{if} ~~~ \gcd(v,pq^2)\in 
	\{p,pq,q^2\}
	\end{array}\right.\]
	and
	\[
	D_{\ell}(x) \equiv \left\{
	\begin{array}{l@{\quad\quad}l}
	1 &\pmod {\Phi_{pq}(x)},\\
	0 &\pmod {(\Phi_{p}(x)\Phi_q(x)\Phi_{q^2}(x))}.
	\end{array}
	\right.
	\]

	%
\end{lemma}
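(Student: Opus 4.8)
The plan is to establish the two displayed statements \emph{together}, by first proving the evaluation formula for $D_{\ell}(\gamma^v)$ and then reading off the polynomial congruences from it. The bridge between the two is the observation that, since $\gamma$ has order $pq^2$, the element $\gamma^v$ has order $pq^2/\gcd(v,pq^2)$; hence as $v$ ranges over all residues with a prescribed value of $\gcd(v,pq^2)$, the power $\gamma^v$ ranges exactly over the primitive roots of unity of the corresponding order. Explicitly, $\gcd(v,pq^2)=q,\,p,\,pq,\,q^2$ correspond respectively to $\gamma^v$ being a primitive $pq$-th, $q^2$-th, $q$-th, and $p$-th root of unity, i.e. to the roots of $\Phi_{pq}(x),\Phi_{q^2}(x),\Phi_{q}(x),\Phi_{p}(x)$. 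Because $pq$, $p$, $q$, $q^2$ are all odd, each of $x^{pq}-1$, $x^{p}-1$, $x^{q}-1$, $x^{q^2}-1$ is separable over $\F_2$, so these cyclotomic polynomials are squarefree and pairwise coprime; therefore knowing the value of $D_{\ell}$ at every root of a given $\Phi_{n}(x)$ is equivalent to knowing $D_{\ell}(x)\bmod \Phi_{n}(x)$.

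First I would compute $D_{\ell}(\gamma^v)=\sum_{u\in D_{\ell}}(\gamma^v)^u$ in each case, exploiting that $(\gamma^v)^u$ depends only on $u$ reduced modulo the order of $\gamma^v$. For $\gcd(v,pq^2)=pq$ the order is $q$, and Lemma~\ref{lma5} shows that $\{u\bmod q:u\in D_{\ell}\}$ equals the multiset $(p-1)*\Z_{q}^*$; since the multiplicity $p-1$ is even, the sum vanishes in $\F_2$. The case $\gcd(v,pq^2)=q^2$ (order $p$) is identical via the first multiset equality of Lemma~\ref{lma5}, the even multiplicity now being $q-1$. The case $\gcd(v,pq^2)=p$ (order $q^2$) uses Lemma~\ref{lma7}: the reductions modulo $q^2$ form $(p-1)*\mathfrak{\g}^{\ell}\langle\mathfrak{g}^{q}\rangle$, again with even multiplicity $p-1$, so the sum is $0$. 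Thus all three ``zero'' cases collapse simply because $p-1$ and $q-1$ are even; no inner sum need even be evaluated.

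The only case requiring a genuine computation is $\gcd(v,pq^2)=q$, where $\gamma^v$ has order $pq$. Here Lemma~\ref{lma6} is sharper: the map $u\mapsto u\bmod pq$ is a bijection from $D_{\ell}$ onto $\Z_{pq}^*$, so $D_{\ell}(\gamma^v)=\sum_{w\in\Z_{pq}^*}(\gamma^v)^w$. Since $\gamma^v$ is a primitive $pq$-th root of unity, $(\gamma^v)^w$ runs over \emph{all} primitive $pq$-th roots of unity exactly once, whence the sum equals the sum of all primitive $pq$-th roots of unity, i.e. the Ramanujan sum $\mu(pq)=1$; reduced modulo $2$ this yields $D_{\ell}(\gamma^v)=1$.

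Finally I would assemble the polynomial congruences. Having shown $D_{\ell}(\zeta)=1$ at every root $\zeta$ of $\Phi_{pq}(x)$, separability gives $\Phi_{pq}(x)\mid D_{\ell}(x)-1$, that is $D_{\ell}(x)\equiv 1\pmod{\Phi_{pq}(x)}$; likewise $D_{\ell}$ vanishes at every root of $\Phi_{p}(x)$, $\Phi_{q}(x)$, $\Phi_{q^2}(x)$, and since these are squarefree and pairwise coprime their product divides $D_{\ell}(x)$, giving $D_{\ell}(x)\equiv 0\pmod{\Phi_{p}(x)\Phi_{q}(x)\Phi_{q^2}(x)}$. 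The main (and essentially only) obstacle is the $\gcd(v,pq^2)=q$ case: one must recognize the bijection of Lemma~\ref{lma6} as producing exactly the set of primitive $pq$-th roots of unity and evaluate the resulting sum, whereas the remaining cases are forced for free by the even multiplicities $p-1,q-1$ supplied by Lemmas~\ref{lma5} and~\ref{lma7}.
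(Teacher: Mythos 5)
Your proof is correct and follows essentially the same route as the paper: evaluate $D_{\ell}(\gamma^v)$ case by case using Lemmas~\ref{lma5}--\ref{lma7} (the three zero cases dying from the even multiplicities $p-1$, $q-1$), then lift the pointwise values to congruences modulo the squarefree, pairwise coprime cyclotomic polynomials. The only cosmetic difference is in the $\gcd(v,pq^2)=q$ case, where you identify the sum of all primitive $pq$-th roots of unity as the Ramanujan/M\"obius value $\mu(pq)=1$, while the paper reads the same value off the second-highest coefficient of $\Phi_{pq}(x)=\Phi_q(x^p)/\Phi_q(x)$.
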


\begin{proof}
	We distinguish two cases according to the distinct value of the greatest common divisor of $v$ and $pq^2$.
	\begin{enumerate}
		\item 
		For $v\in \Z_{pq^2}$ with $\gcd(v,pq^2)=q$, it follows that $\gamma^v$ is a $pq$-th primitive root of unity . On the basis of Lemma~\ref{lma6}, we have
		$$D_{\ell}(\gamma^v)= \sum_{u\in D_{\ell}}\gamma^{uv}=\sum_{u\in \Z_{pq}^*}(\gamma^{v})^u.$$
		Note that $\sum_{u\in \Z_{pq}^*}\gamma^{uv}$ is equal to the sum of all $pq$-th primitive roots of unity that is also the coefficient of the second highest term of the cyclotomic polynomial $\Phi_{pq}(x)$. According to Exercise 2.57 of~\cite{Lidl1986}, we see that
		\begin{equation*}
		\begin{aligned}
		\Phi_{pq}(x)=&\frac{\Phi_{q}(x^p)}{\Phi_{q}(x)}=\frac{x^{p(q-1)}+x^{p(q-2)}+\cdots+1}{x^{q-1}+x^{q-2}+\cdots+1}\\
		=&x^{(p-1)(q-1)}+1\cdot x^{(p-1)(q-1)-1}+\cdots.
	\end{aligned}
		\end{equation*}
		This indicates that $$D_{\ell}(\gamma^v)= \sum_{u\in{D_{\ell}}}\gamma^{uv}=\sum_{u\in \Z_{pq}^*}\gamma^{uv}=1$$ for $v\in \Z_{pq^2}$ with $\gcd(v,pq^2)=q$.

		\item For $v\in \Z_{pq^2}$ with $\gcd(v,pq^2)=q^2$, it follows that $\gamma^v$ is a $p$-th primitive root of unity. It follows from Lemma~\ref{lma5} and the even parity of $(q-1)$ that
		$$\sum_{u\in D_{\ell}}\gamma^{uv}=(q-1)\sum_{u\in \Z_p^*}(\gamma^{v})^{u}=\equiv0 \pmod 2.$$
		
		For $v\in \Z_{pq^2}$ with $\gcd(v,pq^2)=pq \textrm{~or~} p$, then $\gamma^v$ is a $q$-th or $q^2$-th primitive root of unity respectively. Using a similar argument, it follows from Lemmas~\ref{lma5} and~\ref{lma7} and the even parity of $(p-1)$ that  $\sum_{u\in{D_{\ell}}}\gamma^{uv} = 0$ in this case. 
	\end{enumerate}
	It follows from the definition of cyclotomic polynomials that 
	\[
	D_{\ell}(x) \equiv \left\{
	\begin{array}{l@{\quad\quad}l}
	1 &\pmod {\Phi_{pq}(x)},\\
	0 &\pmod {\Phi_{n}(x)} ~\textrm{if}~ n= p,~q~  \textrm{or}~q^2.
	\end{array}
	\right.
	\]
	Therefore, we get the desired result since the cyclotomic polynomials $\Phi_{p}(x),\Phi_{q}(x)$ and $\Phi_{q^2}(x)$ over $\F_2$ are relatively prime.$ $
\end{proof}

We are now in a position to give a proof of Theorem~\ref{th1}.

\begin{proof}[Proof of Theorem~\ref{th1}]
For $j\in \Z_q$, we denote	
$\Lambda_{j}(x)=\sum\limits_{\ell=\frac{q+1}{2}}^{q-1}D_{\ell+j}(x),$ where all the subscripts are  understood modulo $q$ here. 
Note that  $\Lambda_{0}(x)=\sum\limits_{\ell=\frac{q+1}{2}}^{q-1}D_{\ell}(x)$ is the generating polynomial of the sequence $\mathbf{s}$ exactly. 
Now we claim that $\Lambda_{j}(\gamma)\neq 0$ for all $j\in \Z_q$. 	

We first prove $2\not\in{D_{0}}$ under the condition that 
$2^{q-1}\not \equiv 1 \pmod{q^2}$. 
Suppose that $2\in{D_{0}},$ i.e., $\psi(2)=\frac{2^{\varphi(pq)}-1}{pq}=0 \pmod{pq}$ according to the definition of Euler quotients.  This implies that
$$\frac{2^{\varphi(pq)}-1}{pq}\equiv 0 \pmod{q}$$
and thus
$$2^{\varphi(pq)}= (2^{q-1})^{p-1} \equiv  1 \pmod{q^2}.$$
This means that the order of  $2^{q-1} \pmod{ q^2}$ is a factor of  $p-1.$ However, it follows from  $2^{q-1}\not \equiv 1 \pmod{q^2}$ that the order of  $2^{q-1} \pmod{ q^2}$ is exactly equal to $q$. 
This implies that $q$ divides $p-1,$ which contradicts the condition that $p< q.$
Hence, there exists some fixed nonzero  $\sigma\in \Z_q$ such that 
$2\in{D_{\sigma}}$.

In the following we argue by contradiction. 
Assume that there exists some $j_0\in \Z_q$ such that
$\Lambda_{j_0}(\gamma)=0,$ where $\gamma$ is a $pq^{2}$-th primitive root of unity.  
By Lemma~\ref{lma4} we get $$0=\Lambda_{j_0}(\gamma)^{2^{i}}=\Lambda_{j_0}(\gamma^{2^{i}})=\Lambda_{j_0+i\sigma}(\gamma)$$
for any $i\in \Z_q$. According to  $\sigma \neq 0$ in $\Z_q$,  the number $j_0+i\sigma$ runs through $\Z_q$ when $i$ runs through $\Z_q$. 
This means that $\Lambda_{j}(\gamma)=0$ for all $ j\in \Z_q.$ In particular, we can choose  $\Lambda_{0}(\gamma)=0$. 

For any $v\in{D_{j}}$ with $ j \in \Z_q,$  it follows from Lemma~\ref{lma4} that 
$$\Lambda_{0}(\gamma^{v})=\sum\limits_{\ell=
	\frac{q+1}{2}}^{q-1}{D_{\ell}}(\gamma^{v})=\sum\limits_{\ell=
	\frac{q+1}{2}}^{q-1}{D_{\ell+j}}(\gamma)=
\Lambda_{j}(\gamma)=0.$$

Note that $\Z_{pq^2}^*=\bigcup\limits_{j=0}^{q-1}D_{j}.$
It is immediate that $\Lambda_{0}(\gamma^{v})=0$ for any $v\in{\Z_{pq^2}^*}.$ Thus the cyclotomic polynomial $\Phi_{pq^2}(x)$ divides $\Lambda_{0}(x)$.
By Lemma~\ref{lma8}, we see that  $\Phi_{q^2}(x)$ divides $\Lambda_{0}(x).$
Then $\Phi_{pq^2}(x)\Phi_{q^2}(x)$ divides $\Lambda_{0}(x)$ since $\gcd(\Phi_{pq^2}(x),\Phi_{q^2}(x))=1.$
On the basis of Exercise 2.57 of~\cite{Lidl1986}, we have 
$$\Phi_{pq^2}(x)\Phi_{q^2}(x)=\Phi_{q^2}(x^{p})=\Phi_{q}(x^{pq})=\sum\limits_{j=0}^{q-1}x^{jpq}.$$
We write
$$\Lambda_{0}(x)\equiv \Phi_{q}(x^{pq})\pi(x) \pmod{x^{pq^2}-1}.$$
Note that 
\begin{equation*}
	\begin{aligned}
x^{pq}\Phi_{q}(x^{pq})=& x^{pq}\sum\limits_{j=0}^{q-1}x^{jpq}\equiv\sum\limits_{j=0}^{q-1}x^{jpq}\\
\equiv&\Phi_{q}(x^{pq}) \pmod{x^{pq^2}-1}.
\end{aligned}
\end{equation*}

We can restrict $\deg{\pi(x)}< pq$  and thus $\pi(x)$ can be written as $\pi(x)=\sum\limits_{i=0}^{t-1}x^{\nu_i},$ where $0\leq \nu_0< \nu_1< \cdots < \nu_{t-1}<pq.$
Then
\begin{equation*}
	\begin{aligned}
	\Lambda_{0}(x)\equiv& \pi(x)\Phi_{q}(x^{pq}) 
	\equiv\sum\limits_{i=0}^{t-1}x^{\nu_i}\sum\limits_{j=0}^{q-1}x^{jpq}\\
	\equiv& \sum\limits_{i=0}^{t-1}\sum\limits_{j=0}^{q-1}x^{\nu_{i}+jpq} \pmod{x^{pq^2}-1}.
	\end{aligned}
\end{equation*}
However $\Lambda_{0}(x)$ has $\frac{1}{2}(p-1)(q-1)^2$ terms and $\sum\limits_{i=0}^{t-1}\sum\limits_{j=0}^{q-1}x^{\nu_{i}+jpq}$ has $qt$ terms, which is a contradiction since the prime $q$ does not divide $\frac{1}{2}(p-1)(q-1)^2$.
It follows that $\Lambda_{j}(\gamma)\neq 0$ for any $j\in \Z_q.$

This implies that $\Lambda_{j}(\gamma^{v})\neq 0$ for all $v\in{\Z_{pq^2}^*}$ and $0\leq j < q.$ In particular, we have $\Lambda_{0}(\gamma^{v})\neq 0$ for all $v\in{\Z_{pq^2}^*}$. 
By Lemma~\ref{lma8}, for $v \in \Z_{pq^2}$ and $0\leq j < q$ we get
\[\Lambda_{0}(\gamma^{v})= \left\{\begin{array}
{l@{\quad\quad}l}
0, &\textrm{if} ~~~ \gcd{(v, pq^2)}\in{\{p, pq, q^2, pq^2\}}, \\
\frac{q-1}{2},  &\textrm{if} ~~~ \gcd{(v, pq^2)}=q.
\end{array}\right.\]
This implies that $(x-1) {(\Phi_{p}(x)\Phi_q(x)\Phi_{q^2}(x))}$ divides $\Lambda_0(x)$ if $q\equiv 3 \pmod 4$. Hence the minimal polynomial in the case that $q\equiv 3 \pmod 4$ is 
$$M_{\sq}(x)=\frac{x^{pq^2}-1}{(x-1) {\Phi_{p}(x)\Phi_q(x)\Phi_{q^2}(x)}}=\Phi_{pq^2}(x)\Phi_{pq}(x)$$
by using the basic properties of cyclotomic polynomials.  In a similar manner, if $q\equiv 1 \pmod{4}$, then $\Phi_{pq}(x)$ divides $\Lambda_0(x)$. This yields that the minimal polynomial in the case that $q\equiv 1 \pmod 4$ is  
$$M_{\sq}(x)=\frac{x^{pq^2}-1}{(x-1) {(\Phi_{p}(x)\Phi_q(x)\Phi_{pq}(x))\Phi_{q^2}(x))}}=\Phi_{pq^2}(x).$$
Note that the linear complexity of $\sq$ is equal to the degree of the minimal polynomial of the sequence and so the third assertion in Theorem~\ref{th1} follows.   
This completes the whole proof of Theorem~\ref{th1}.$ $
\end{proof}

In the following, we will give a small example for confirming our main results. 
\begin{example}
	Let $p=3$ and $q=7$. The least period of the binary threshold sequence  $\sq$ derived from modulo $pq$ is $147$. 	 The sequence $\sq$ in one period is 
	\begin{equation*}
	\begin{split}
	\{0&, 0, 0, 0, 1, 1, 0, 0, 1, 0, 0, 0, 0, 0, 0, 0, 0, 0, 0, 0, 0, 0, 1, 1, 0, 1,\\ 
	  1&, 0, 0, 0, 0, 0, 0, 0, 1, 0, 0, 0, 0, 0,1,1, 0, 0, 1, 0, 1, 0, 0, 0, 0, 0, \\
	  1&,1, 0,0, 0, 0, 1, 0, 0, 0, 0, 0, 1, 0, 0, 0, 	0, 0, 0, 1, 0, 1, 1, 0, 1, 0, \\
	0&, 0, 0, 0,0,1, 0, 0, 0, 0, 0, 1, 0, 0, 0, 0, 1, 1, 0, 0, 0, 0, 0, 1, 0, 1, \\
	0&, 0, 1, 1, 0, 
	0, 0,0, 0, 1, 0, 0, 0, 0, 0, 0, 0, 1, 1, 0, 1, 1, 0, 0, 0, 0,\\
	 0&, 0, 0, 0, 0, 0, 0,0, 0,1, 0, 0, 1, 1, 0, 0, 0 \}.	\end{split}
	\end{equation*}	 
	The minimal polynomial of the sequence $\sq$ over $\F_2$ is	
	\begin{equation*}
	\begin{split}
	x^{96}& + x^{95} + x^{93} + x^{92} + x^{90} + x^{89} + x^{87} + x^{86} + x^{84} + x^{83} \\
	 +& x^{81} + x^{80} +x^{78} + x^{77}+ x^{75}+ x^{74} + x^{72} + x^{71} + x^{69}  \\
	 +&x^{68} + x^{66} + x^{65} + x^{63} + 	x^{62} + x^{60} + x^{59} + x^{57} + x^{56} \\
	+&x^{54}+ x^{53} + x^{51} + x^{50} + x^{48} + x^{46} + x^{45} + x^{43} + x^{42} \\
	+& x^{40} + x^{39} + x^{37} + x^{36} + x^{34} + x^{33}+x^{31} + x^{30} + x^{28}\\
	+&  x^{27} + x^{25} + x^{24} + x^{22} + x^{21} + x^{19} + x^{18}
	+x^{16} + x^{15}\\
	+& x^{13} + 	x^{12} + x^{10} + x^{9} + x^{7} + x^6 + x^{4} + x^{3} + x + 1
	\end{split}
	\end{equation*}	 
	and  so the linear complexity of this sequence is exactly $(p-1)\cdot(q^2-1)=2\cdot 48=96.$
	
\end{example}

\section{\textsf{Conclusion Remarks}}\label{sec4}
In this paper, we determined the  linear complexities of a class of binary sequences with period $pq^2$ based on the  Euler quotients modulo $pq$. In addition, the proposed sequences have a good balance asymptotically if the prime $p$ tends to infinity, i.e., the number of 1's is asymptotically equal to the number of 0's in one period if $p$ tends to infinity. {\color{black}
Finally, there are several unsolved problems about the proposed sequence. Below are some of them. 
\begin{itemize}
	\item  Determine $k$-error linear complexity of the proposed sequence. This problem is closely related to the stability of linear complexity of the proposed sequence from a cryptographic point of view.
	\item  Explore whether this family of sequences derived from the Euler quotient modulo $pq$ can induce more optimal families of perfect polyphase sequences similar to~\cite{Song2016}. 
	\item  Regard the proposed sequence as the sequence over the ring $\Z_4$ or other finite fields with odd prime characteristics and analyze its cryptographic properties ( for example, linear complexity and $k$-error linear complexity). 
	\item Determine the autocorrelation of the proposed sequence although it may be hard to tackle this problem. 
	\item Analyze properties of sequences derived from the Euler quotient with other modulus (for example, the modulus is equal to the product of more odd distinct primes). 
\end{itemize}
}
%
%

\section*{Acknowledgment}
The authors would like to thank the associate editor Prof. Sihem Mesnager 
and the two anonymous referees for
their helpful suggestions which improved this manuscript.
	
	Part of the work were done while  the first and third authors were Visiting  Scholars at Clemson University.

%
\IEEEpeerreviewmaketitle


\ifCLASSOPTIONcaptionsoff
  \newpage
\fi


\vfill

\begin{IEEEbiographynophoto}{Jingwei Zhang}
received the B.S. degree (2002) from Department
	of Mathematics, Hunan University of Sicence and Technology, 
	the M.S. degree (2005)  in Department of Mathematics and the Ph.D. degree (2010) in Department of Electrical Engineering from Sun Yat-sen University,
	Guangzhou, P.R. China. She currently works in school of Informatics
	in Guangdong University of Finance $\&$ Economics,Guangzhou, P.R. China.  Her research interest
	includes algebraic coding theory, algebraic decoding algorithms and sequences.
\end{IEEEbiographynophoto}
\begin{IEEEbiographynophoto}{Chang-An Zhao}
	received his BS in Department of Electrical Engineering in 2001,  MS in Department of Mathematics in 2005,  and PhD  in Department of  Computer Science in 2008 respectively from Sun Yat-sen University, Guangzhou, P.R. China. From 2008 to 2013, he worked in Department of computer Science in Guangzhou University, P.R. China. In 2013, he joined Department of Mathematics  as a lecturer in Sun Yat-sen University, P.R. China and was promoted to be an associate professor in 2016.  His research interest
	lies in sequences, elliptic curve cryptography and coding theory.
\end{IEEEbiographynophoto}
\begin{IEEEbiographynophoto}{Shuhong Gao}
Shuhong Gao received his BS (1983) and MS (1986) from Department of Mathematics, Sichuan University, China, and PhD (1993) from Department of Combinatorics and Optimization, University of Waterloo, Canada.  From 1993 to 1995, he was an NSERC Postdoctoral Fellow in Department of Computer Science, University of Toronto, Canada. He joined Clemson University in USA in 1995 as an assistant professor in Mathematical Sciences, and was promoted to associate professor in 2000 (with early tenure) and to full professor in 2002.  Professor Gao's research interests include coding theory, cryptography, blockchains, machine learning, quantum computing, and computational algebraic geometry. More information about his research and  teaching can be found at Applicable Algebra Lab: https://www.ces.clemson.edu/aca/.
\end{IEEEbiographynophoto}

\end{document}